\theoremstyle{remark}
\newcommand{\haris}[1]{\textcolor{blue}{\textbf{Haris says:} #1}}
\newcommand{\fb}[1]{\textcolor{purple}{\textbf{Florian says:} #1}}
\newcommand{\fixme}[1]{{\color{red}\fbox{FIXME} #1}}
\newcommand{\haris}[1]{}
\newcommand{\fb}[1]{}
\newcommand{\fixme}[1]{}
	\newcommand{\rr}[0]{\ensuremath{\mathit{REV}}\xspace}
	\newcommand{\srr}[0]{\ensuremath{\mathit{S}}--\ensuremath{\mathit{REV}}\xspace}
	\newcommand{\midd}{\mathbin{:}}
	\renewcommand{\pref}{\succsim\xspace}
	\newcommand{\apref}[1][]{%
	\ifthenelse{\equal{#1}{}}{{\pref}_N}{{\pref^{#1}}_N}%
	}
	\newcommand{\aprefs}[1][]{%
	\ifthenelse{\equal{#1}{}}{{\succ}_N}{{\succ^{#1}}_N}%
	}
	\newcommand{\opref}[1][]{%
	\ifthenelse{\equal{#1}{}}{{\pref}_O}{{\pref^{#1}}_O}%
	}
	\newcommand{\oprefs}[1][]{%
	\ifthenelse{\equal{#1}{}}{{\succ}_O}{{\succ^{#1}}_O}%
	}
	\newcommand{\pss}[1][]{%
	\ifthenelse{\equal{#1}{}}{({\succ}_N,{\succ}_O)}{({\succ^{#1}}_N,{\succ^{#1}}_O)}%
	}
	\newcommand{\quotas}[1][]{%
	\ifthenelse{\equal{#1}{}}{(q_c)}{(q_c#1)}%
	}
	\newcommand{\prios}[1][]{%
	\ifthenelse{\equal{#1}{}}{({\succsim_c})}{({\succsim_c#1})}%
	}
	\newcommand{\ms}{\mathit{ms}}
	\newtheorem*{theorem*}{Theorem}
\def\@fnsymbol#1{\ensuremath{\ifcase#1\or \dagger\or \ddagger\or
   \mathsection\or \mathparagraph\or \|\or **\or \dagger\dagger
   \or \ddagger\ddagger \else\@ctrerr\fi}}
\begin{document}

\title{Efficient, Fair, and Incentive-Compatible Healthcare Rationing}


\author{Haris Aziz\thanks{UNSW Sydney and Data61 CSIRO, Australia, \texttt{haris.aziz@unsw.edu.au}} \and Florian Brandl\thanks{University of Bonn, Germany, \texttt{florian.brandl@uni-bonn.de}}}
 

	\maketitle

\begin{abstract}
			Rationing of healthcare resources has emerged as an important issue, which has been discussed by medical experts, policy-makers, and the general public. We consider a rationing problem where medical units are to be allocated to patients. Each unit is reserved for one of several categories and each category has a priority ranking of the patients. We present an allocation rule that respects the priorities, complies with the eligibility requirements, allocates the largest feasible number of units, and does not incentivize agents to hide that they qualify through a category. The rule characterizes all possible allocations that satisfy the first three properties and is polynomial-time computable.

		\par\vskip\baselineskip
		\textbf{Keywords:} Allocation under priorities, healthcare rationing, assignment maximization.
\end{abstract}


\section{Introduction}

The COVID-19 pandemic has emerged as one of the major challenges the world has faced. It resulted in a frantic race to produce the most effective and safe vaccine to stem the devastating effects of the pandemic. 
While the creation, testing, and approval of vaccines proceeded at an unprecedented pace, the initial scarcity posed the question of how to distribute them efficiently and fairly.
The same problem arises for other scarce or costly healthcare resources such as ventilators and antiviral treatments.
One approach for allocating resources is to assign patients to priority groups, which capture how much a patient needs treatment.
For example, three priority groups that have been highlighted by medical practitioners and policy-makers are health care workers, other essential workers and people in high-transmission settings, and people with medical vulnerabilities associated with poorer COVID-19 outcomes~\citep{PPE20a,TMD20a}. Other concerns that have been discussed include racial equity~\citep{BrTl21a}.

It is however not enough to identify priority groups. 
There is also a need to algorithmically and transparently make allocation decisions~\citep{EPU+20a,WHO20a} based on the priorities. 
In a New York Times article, the issue has been referred to as one of the hardest decisions health organizations need to make~\citep{Fink20a}. 
Since the decisions need to be justified to the public, they must be aligned with ethical guidelines, such as respecting the priorities among the patients.
These decisions are not straightforward, especially when a patient is eligible for more than one category. 
In that case, the decision of which category is used to serve a patient can have compounding effects on what categories other agents can use.
A competing objective is allocating the resources efficiently in the sense that they have maximum social benefit. 
For example, no medical unit should be left unallocated.
Thus, the following fundamental question arises.
\begin{quoting}
How should we allocate scarce medical resources fairly and efficiently while taking into account various ethical principles and priority groups?
\end{quoting}


The problem of health care rationing has recently been formally studied by market designers. 
\citet*{PSU+20a} were among the first to frame the problem as a two-sided matching problem in which patients are on one side and the resources units are on the other side. 
In doing so, they linked the healthcare rationing problem with the rich field of two-sided matching~\citep{RoSo90a}. 
\citeauthor{PSU+20a} suggested dividing the units into different reserve categories, each with its own priority ranking of the patients. The categories and the category-specific priorities represent the ethical principles and guidelines that a policy-maker may wish to implement.\footnote{See for example, the book by \citet{BoHi14a} on the ethics of healthcare rationing that discusses many of these principles.} For example, a category for senior people may have an age-specific priority ranking that puts the eldest citizens first. Having a holistic framework that considers different types of priorities has been termed important in healthcare rationing.\footnote{In a report issued by the Deeble Institute for Health Policy Research, \citet{Mart15a} writes that ``To establish robust healthcare rationing in Australia, decision-makers need to acknowledge the various implicit and explicit priorities that influence the process and develop a decision-making tool that incorporates them.''} The approach of \citeauthor{PSU+20a} has been  recommended or adopted by various organizations including the
NASEM (National Academies of Sciences, Engineering, and Medicine) Framework for Equitable Vaccine Allocation~\citep{NASEM20a} and has been endorsed in medical circles~\citep{PPE20a,SPU+20b}.
The approach has been covered widely in the media, including the New York Times and the Washington Post.\footnote{\url{https://www.covid19reservesystem.org/media}}

For their two-sided matching formulation, \citet{PSU+20a} proposed a solution for the problem. 
One of their key insights was that  running the Deferred Acceptance algorithm~\citep{GaSh62a,Roth08a} on the underlying problem satisfies basic relevant axioms (eligibility compliance, respect of priorities, and non-wastefulness). 
They also showed when all the category priorities are consistent with a global baseline priority, their Smart Reserves algorithm computes a maximum size matching satisfying the basic axioms. 
The Smart Reserves algorithm makes careful decisions about which category should be availed by which patient to achieve the maximum size property. 
However, the general problem with general \emph{heterogeneous} priorities has not been addressed in the literature. 
Allowing heterogeneous priorities for categories seems to be very much in the spirit of incorporating different ethical values. 
For example, one would expect the priority ordering for old people to be very different from a priority ordering for front-line workers which would favor energetic medical professionals.\footnote{Even in other contexts such as immigration, where rationing policies are applied, respecting heterogeneous priorities can be important. For example, if a country has a quota for admitting engineers, the top engineering applicant who satisfies basic eligibility requirements may have a good case to be issued a visa.}
In this paper, we set out to address this issue and answer the following research problem. 
\begin{quoting}
For healthcare rationing with heterogeneous priorities, how should we allocate resources in a fair, economically efficient, strategyproof, and computationally tractable way?
\end{quoting}

\paragraph{Contribution}
We consider the healthcare rationing with heterogeneous priorities.
First we highlight that naively ascribing strict preferences over the categories to the agents
can have adverse effects on the efficiency of the outcome when patients are eligible for multiple categories. If the eligibility requirements are treated as hard constraints, it leads to inefficient allocation of resources. If the eligibility requirements are treated as soft constraints, then the outcome does not allocate the resources optimally to the highest priority patients,
thereby undermining important ethical principles.  

Our first contribution is introducing the Reverse Rejecting (\rr) rule, which
\begin{enumerate}[label=(\roman*)]
	\item complies with the eligibility requirements,
	\item respects the priorities of the categories (for each category, patients of higher priority are served first),
	\item yields a maximum size matching (one that allocates the largest feasible number of units to eligible patients),
		\item is non-wasteful (no unit is unused but could be used by some eligible patient),
	\item is strategyproof (does not incentivize agents to under-report the categories they qualify for), and
	\item is strongly polynomial-time computable.
\end{enumerate}

We prove that the \rr rule characterizes all outcomes that satisfy the first three properties.\footnote{In a preliminary version of the paper, we proposed a different rule that satisfies the above properties. It requires solving as maximum weight matching of a corresponding graph. 
 However, it does not characterize all outcomes that satisfy the first three properties.}
We show how the \rr rule can be extended to a more general rule called the \emph{Smart Reverse Rejecting (\srr)} rule, which processes given numbers of unreserved units first and last and which incorporates the additional goal of allocating the largest feasible number of units from a designated subset of categories called preferential categories. 
The \srr rule satisfies a new axiom we call order preservation, which is parameterized by how many unreserved units are processed first and last.  
Moreover, it generalizes two well-known reserves rules---over-and-above and minimum-guarantees~\citep{Gala61a,Gala84a}---that are understood in the context of preferential categories having consistent priorities. 
Finally, we discuss how our algorithms and their properties extend to the case where the reservations are treated as soft reservations.


Our algorithm immediately applies to the school choice problem in which students are only interested in being matched to one of their acceptable schools. It also applies to hiring settings in which applicants are interested in one of the positions, and each of the departments has its own priorities. 
Lastly, it applies to many other rationing scenarios, such as the allocation of limited slots at public events or visas to immigration applicants. 

\section{Related Work}

The paper is related to an active area of research on matching with distributional constraints \citep[see, e.g.,][]{Koji19a,ABY21a}. One general class of distributional constraints that have been examined in matching market design pertains to common quotas over institutions such as hospitals~~\citep{KaKo15a,KaKo17a,BFIM10a,GIKY+16a}. 

Within the umbrella of work on matching with distributional constraints, particularly relevant to healthcare rationing is the literature on school choice with diversity constraints and reserve systems~\citep{HYY13a,EHYY14a,EcYe15a,KHIY17a,AyTu20a,AyBo20a,AGS20a,GNKR19a,UKPS18a,DPS20a}. 
Categories in healthcare rationing correspond to affirmative action types in school choice.
For a brief survey, we suggest the book chapter by \citet{Heo19a}. Except for the special case in which students have exactly one type \citep[see, e.g.,][]{EHYY14a}, most of the approaches do not achieve diversity goals optimally, whereas for the healthcare rationing problem we consider, we aim to find matchings that maximize the number of units allocated to eligible patients. \citet{ADF17a}, \citet{DSSX19a}, and \citet{AAD+20a} 
consider optimisation approaches for diverse matchings, but their objective and models are different.

\citet{PSU+20a} were the first to frame a rationing problem with category priorities as a two-sided matching problem in which agents are simply interested in a unit of resource and the resources are reserved for different categories. They show that artificially enforcing strict preferences of the agents over the categories and running the Deferred Acceptance algorithm results in desirable outcomes for the rationing problem. They note, however, that this approach may lead to matchings that are not Pareto optimal. 
They then proposed to use the smart reserves approach of \citet{SoYe20a} for the restricted problem when all the {preferential} categories have {consistent} priorities.
Our results can be viewed as simultaneously achieving the  {key} axioms of the two approaches of \citet{PSU+20a}. Firstly, we propose a new algorithm that achieves the same key axiomatic properties for heterogeneous priorities as the algorithms of \citet{SoYe20a} and \citet{PSU+20a} for homogeneous priorities. Secondly, our algorithm has an important advantage over the Deferred Acceptance formulation of \citet{PSU+20a} for the case of heterogeneous priorities as our approach additionally achieves the important property of maximality of size. {It additionally satisfies a property called maximality in beneficiary assignment, which requires that the maximum number of units from the set of `preferential' categories are used. }
{\citet{PSU+20a} design a flexible feature of their Smart Reserves rule that gives agents a designated number of unreserved units before the other units are processed. 
By doing so, they elegantly capture two extreme approaches within their class that have wide-spread appeal. The first approach is based on \emph{minimum-guarantees} that specifies the minimum number of units that are kept for a particular agent group. 
The second approach is \emph{over-and-above}; it sets aside the specified number of units for an agent group and only uses them once all the unreserved units are allocated (for which the agent group is eligible as well).\footnote{Both the ``minimum-guarantees'' and ``over-and-above'' approaches have been discussed in the context of reserves systems in India (see, e.g., \citet{Gala61a,Gala84a}).}
Our \srr rule achieves these features even for the case of heterogeneous priorities.}
 In follow-up work, \citet{Grig20a} considers optimisation approaches for variants of the problem but does not present any polynomial-time algorithm or consider incentive issues. In contrast to the papers on healthcare rationing discussed above, we also consider strategyproofness aspects and show that our rule complies with them.

In this paper, we attempt to compute what are essentially maximum size stable matchings. The problem of computing such matchings is NP-hard if both sides have strict preferences/priorities~\citep{BMM10a}. In our problem, the agents essentially have dichotomous preferences (categories they are/are not eligible for) and, hence, we are able to obtain a polynomial-time algorithm for the problem. 

Furthermore, our rules are strategyproof. In contrast, for other two-sided matching settings, it is known that maximizing the number of matched individuals results in incentive and fairness impossibilities \citep[see, e.g.,][]{ABT20a,KMRZ+14a}. Computing outcomes that match as many agents as possible, has also been examined in related but different contexts \citep[see, e.g.,][]{Aziz17b,AnEh17a,ABS07a,BoMo15a}.

%

\section{Model}

We adopt the model of \citet{PSU+20a} with one generalization: we allow the categories' priorities over agents to be weak rather than strict. 
There are $q$ identical and indivisible units of some resource, a finite set $C$ of categories, and a set $N$ of agents with $|N| = n$.
Each category $c$ has a quota $q_c\in\mathbb N$ with $\sum_{c\in C}q_c=q$ and a priority ranking $\pref_c$, which is a preorder on $N \cup\{\emptyset\}$. 
An agent $i$ is \emph{eligible} for category $c$ if $i \succ_c\emptyset$. 
We denote by $N_c$ the set of agents who are eligible for $c$.
We say that $I = (N,C, \prios_{c\in C}, \quotas_{c\in C})$ is an instance (of the rationing problem).
We will write $\prios$ and $(q_c)$ for the profile of priorities and quotas in the sequel.
We also consider a baseline ordering $\succ_\pi$, which can be an arbitrary permutation of the agents.
It could be interpreted as a global scale measuring the need for treatment.

A matching $\mu\colon N\rightarrow C\cup \{\emptyset\} $ is a function that maps each agent to a category or to $\emptyset$ and satisfies the capacity constraints: for each $c\in C$, $|\mu^{-1}(c)|\leq q_c$.
For an agent $i\in N$, $\mu(i)=\emptyset$ means that $i$ is unmatched (that is, does not receive any unit) and $\mu(i)=c$ means that $i$ receives a unit reserved for category $c$.
When convenient, we will identify a matching $\mu$ with the set of agent-category pairs $\{\{i,\mu(i)\}\colon \mu(i)\neq\emptyset\}$.\footnote{In graph theoretic terms, $\mu$ is a $b$-matching because multiple edges in $\mu$ can be adjacent to a category $c$.}

We introduce four axioms in the context of allocating medical units that are well-grounded in practice. 
For further motivation of these axioms, we recommend the detailed discussions by \citet{PSU+20a}.

The first axiom requires matchings to comply with the eligibility requirements. 
It specifies that a patient should only take a unit of a category for which the patient is eligible. 
For example, a young person should not take a unit from the units reserved for elderly people. 

\begin{definition}[Compliance with eligibility requirements]\label{def:eligibility}
A matching $\mu$ \emph{complies with eligibility requirements} if for any $i\in N$ and $c\in C$, $\mu(i)=c$ implies $i \succ_c \emptyset$.
\end{definition}

The second axiom concerns the respect of priorities of categories. It rules out that a patient is matched to a unit of some category $c$ while some other agent with a higher priority for $c$ is unmatched.  

\begin{definition}[Respect of priorities]\label{def:priorities}
A matching $\mu$ \emph{respects priorities} if for any $i,j\in N$ and $c\in C$, $\mu(i)=c$ and $\mu(j)=\emptyset $ implies $ j\not\succ_c i$. 
If there exist $i,j\in N$ and $c\in C$ with $\mu(i)=c$, $\mu(j)=\emptyset$, and $ j \succ_c i$, we say that $j$ has \emph{justified envy} towards $i$ for category $c$.
\end{definition}

An astute reader who is familiar with the theory of stable matchings will immediately realize that the axiom ``respect of priorities'' is equivalent to \emph{justified envy-freeness} in the context of school-choice matchings~\citep{AbSo03a}.

Next, non-wastefulness requires that if an agent is unmatched despite being eligible for a category, then all units reserved for that category are matched to other agents.

\begin{definition}[Non-wastefulness]\label{def:nonwasteful}
A matching $\mu$ is \emph{non-wasteful} if for any $i\in N$ and $c\in C$, $i \succ_c \emptyset$ and $\mu(i)=\emptyset $ implies $ |\mu^{-1}(c)| =q_c$.
\end{definition}

Not all non-wasteful matchings allocate the same number of units. 
In particular, some may not allocate as many units as possible.
A stronger efficiency notion prescribes that the number of allocated units is maximal subject to compliance with the eligibility requirements.

\begin{definition}[Maximum size matching]\label{def:maximum}
A matching $\mu$ is a \emph{maximum size matching} if it has maximum size among all matchings complying with the eligibility requirements.
\end{definition}

These four axioms capture the first guideline put forth in the report by the National Academies of Sciences, Engineering, and Medicine: ``ensure that allocation maximizes benefit to patients, mitigates inequities and disparities, and adheres to ethical principles''~\citep[page 69]{NASEM20a}. {Requiring matchings to be of maximum size is aligned with the principle to ``gain the best value we possibly can from the expenditure of that resource''~\citep{DIJ+20a}.}

It will be useful to associate a graph $B_I$, called a \emph{reservation graph}, with an instance $I$.
$B_I = (N\cup C,E)$ is a bipartite graph with an edge from $i$ to $c$ if $i$ is eligible for $c$.
That is, $E = \{\{i,c\}\colon i\succ_c\emptyset\}$.
If $G$ is any graph, we denote by $\ms(G)$ the number of edges in a maximum size matching of $G$.

The following example illustrates the definitions above.
\begin{example}\label{ex:running}
	Suppose there are three agents and two categories with one reserved unit each. 
	\[N=\{1,2,3\},\quad C=\{c_1,c_2\},\quad q_{c_1}=1, q_{c_2}=1.\]
The priority ranking of $c_1$ is $2\succ_{c_1} 3 \succ_{c_1} \emptyset \succ_{c_1} 1$ and the priority ranking of $c_2$ is $2\succ_{c_2} \emptyset  \succ_{c_2} 1 \succ_{c_2} 3$.
\Cref{fig:firstexample-0} illustrates this instance $I$ of the rationing problem. 
	 
	Note that agent 1 is not eligible for any category, agent 2 is eligible for $c_1$ and $c_2$, and agent 3 is eligible only for $c_1$. 	
Thus, the following matchings comply with the eligibility requirements.
\begin{align*}
	&\mu_1 = \emptyset &&\mu_2 = \{\{2,c_1\}\} && &\mu_3 = \{\{2,c_2\}\}\\
	&\mu_4 = \{\{3,c_1\}\} &&\mu_5 = \{\{2,c_2\},\{3,c_1\}\}
\end{align*}
All of these matchings except for $\mu_4$ respect priorities.
Only $\mu_2$ and $\mu_5$ are non-wasteful.
The only maximum size matching is $\mu_5$.
\end{example}
	
	   	   	    	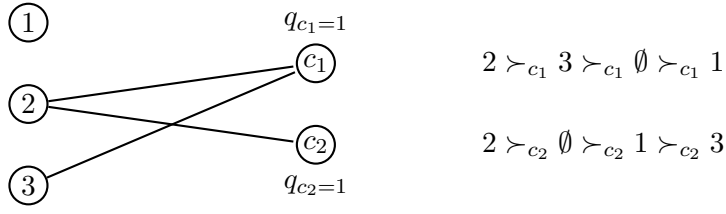
\begin{figure}[]
	   	   	    		\begin{center}
	   	   	    	\begin{tikzpicture}[-,>=stealth',shorten >=1pt,auto,node distance=3cm,
	   	   	    	        thick,main node/.style={circle,fill=white!20,draw,minimum size=0.5cm,inner sep=0pt]}, scale=0.9]
			\def\dist{1.2}

	   	   	    	    \node[main node] at (0, 0) (1)  {$1$};
	   	   	    	    \node[main node] at (0, -\dist) (2)  {$2$};
	   	   	    	    \node[main node] at (0, -2*\dist) (3)  {$3$};

	   	   	    	    \node[main node] at (3.5*\dist, -.5*\dist) (c1)  {$c_1$};
	   	   	    	    \node[main node] at (3.5*\dist, -1.5*\dist) (c2)  {$c_2$};
						
								\node[] at (3.5*\dist, -2*\dist) {{$q_{{c_2}=1}$}};
						
		\node[] at (7*\dist, -.5*\dist) {{$2\succ_{c_1} 3 \succ_{c_1} \emptyset \succ_{c_1} 1$}};
		
		\node[] at (3.5*\dist, -0) {{$q_{{c_1}=1}$}};
		
		\node[] at (7*\dist, -1.5*\dist) {{$2\succ_{c_2} \emptyset  \succ_{c_2} 1 \succ_{c_2} 3$}};

 %
 %

	   			    	   	        \draw (2) -- (c1);
	   	\draw (2) -- (c2);
	   		\draw (3) -- (c1);

	    %
	    %
	    %

	   	   	    	\end{tikzpicture}
	   	   	    	\end{center}
		
	   	   		\caption{The instance $I$ described in \Cref{ex:running}. The reservation graph $B_I$ on the left has an edge from $i$ to $c$ if $i$ is eligible for $c$. The priority rankings of the categories are depicted on the right.
}
	   			\label{fig:firstexample-0}
	   	   	\end{figure}

%
%

We are interested in allocation rules, which, for each instance, return a matching.

\begin{definition}[Allocation rule]
	An allocation rule maps every instance $I$ to a matching for $I$. 
\end{definition}

We say that an allocation rule $f$ satisfies one of the axioms in Definitions~\ref{def:eligibility} to~\ref{def:maximum} if $f(I)$ satisfies the axiom for all instances $I$.
Moreover, we define a notion of strategyproofness for allocation rules. 
Note that all units are identical and agents have no preferences over the category of the unit they receive.
However, they may have an incentive to hide being eligible for a category, or, more generally, to aim for a lower priority for some category.\footnote{In the context of school choice, lowering oneself in the priority ranking of a school is akin to students deliberately under-performing in an entrance exam.}

Formalizing strategyproofness requires the following definition.
Let $\prios$ and $\prios[']$ be priority profiles and $i \in N$.
We say agent $i$'s priority decreases from $\prios$ to $\prios[']$ if for all $j,k \neq i$ and $c\in C$,
\begin{align*}
	j \pref_c k &\longleftrightarrow j \pref_c' k\\
	j \pref_c i &\longrightarrow j \pref_c' i \text{ and } j \succ_c i \longrightarrow j \succ_c' i
\end{align*}
That is, the priority rankings over agents other than $i$ are the same in both profiles and $i$ can only move down in the priority rankings from $\prios$ to $\prios[']$.
We also say that $i$'s priority decreases from $I = (N,C,\prios,(q_c))$ to $I' = (N,C,\prios['],(q_c))$.
Strategyproofness requires that if $i$ is unmatched for $I$, then $i$ is also unmatched for $I'$.

\begin{definition}[Strategyproofness]
	An allocation rule $f$ is strategyproof if $f(I)(i) = \emptyset$ implies $f(I')(i) = \emptyset$ whenever $i$'s priority decreases from $I$ to $I'$.
\end{definition}	
In particular, with a strategyproof allocation rule, agents cannot benefit from hiding that they are eligible for a category.\footnote{{This restricted version of strategyproofness under which agents do not have an incentive to hide their eligible categories, has been referred to as incentive-compatibility by \citet{AyBo20a}.}}

An allocation rule is non-bossy if no unmatched agent can decrease her priority to change the set of matched agents. 
Combined with the other axioms, this property turns out to be fairly demanding. 
Thus, we weaken it by requiring only that no unmatched agent can decrease her priority and thereby change which of the agents lower in the baseline ordering are matched.
\begin{definition}[Weak non-bossiness]
	An allocation rule $f$ is weakly non-bossy if $f(I)(i) = \emptyset$ implies that any $j$ with $i\succ_\pi j$ is matched in $f(I)$ if and only if $j$ is matched in $f(I')$ whenever $i$'s priority decreases from $I$ to $I'$.
\end{definition}

\section{Issues with Breaking Ties and Applying the Deferred Acceptance Algorithm}\label{sec:da}

\citet{PSU+20a} showed that if one artificially introduces (strict) preferences for the agents over the categories they are eligible for and applies the Deferred Acceptance algorithm to the resulting two-sided matching problem, the resulting matching complies with the eligibility requirements, respects priorities, and is non-wasteful \citep[][Theorem 2]{PSU+20a}. 
They state the algorithmic implications of their result. 
\begin{quoting}
	``Not only is this result a second characterization of matchings that satisfy our three basic axioms, it also provides a concrete procedure to calculate all such matchings.''
	\end{quoting} 
	
Although considering all possible artificial preferences and running the Deferred Acceptance algorithm gives all the matchings satisfying the three axioms, it is computationally expensive to consider ${|C|!}^n$ different preference profiles. 
Moreover, not every preference profile leads to a compelling outcome even if the categories have strict priorities. 
For example, some preference profiles---like the one in the following example---lead to matchings that are not of maximum size.\footnote{The issue is also evident from the discussion by \citet{PSU+20a} where they point out that sequential treatment of categories may not give a maximum size matching.}


\begin{example}
%
Consider the instance in Example~\ref{ex:running}.
	Suppose we run the Deferred Acceptance algorithm assuming all agents prefer $c_1$ to $c_2$ to $c_3$. 
	Assuming agents can only be matched to categories they are eligible for (compliance with eligibility requirements), the resulting matching is $\mu_2 = \{\{2,c_1\}\}$.
	This matching is however not the most efficient use of the resources because it is possible to allocate all units while still satisfying the axioms in Definitions~\ref{def:eligibility} to~\ref{def:nonwasteful} by choosing $\mu_5 = \{\{3,c_1\},\{2,c_2\}\}$.
	\end{example}
	
Hence, artificially introducing preferences of agents and running the Deferred Acceptance algorithm can lead to inefficient allocations.
Even if we ignore computational concerns and can assign preferences to agents so that the matching selected by the Deferred Acceptance algorithm is of maximum size and respects priorities, it is not clear whether such a rule is strategyproof in the above sense. 
We propose a rule that circumvents both issues.

\section{The Reverse Rejecting Rule}

The Reverse Rejecting Rule (\rr) considers the agents in order of the baseline ordering $\succ_\pi$ from lowest to highest priority.
It rejects an agent if the agents who have not been rejected thus far can form a maximum size matching (among matchings in the reservation graph) for which none of the rejected agents has justified envy towards a matched agent.
The second constraint is implemented by removing those edges from the reservation graph $B_I$ that may cause justified envy by rejected agents.
That is, if agent $i$ is rejected, we cannot allow any agent $j$ to be matched to a category $c$ if $i\succ_c j$, and so remove the edge $\{j,c\}$ from $B_I$. 
More generally, if $R$ is the set of rejected agents, $B_I^{-R}=((N\setminus R)\cup C,E)$ with 
\[E=\{\{j,c\}\midd j\succ_c \emptyset \text{ and there is no } i\in R \text{ such that } i\succ_c j\}\]
is the corresponding reduced reservation graph.
Note that if $R\supset R'$, then 
$B_I^{-R}$ is a subgraph of $B_I^{-R'}$.
In particular, $B_I^{-R}$ is a subgraph of $B_I$.
The \rr rule then works as follows:
\begin{quoting}
	Let the set of rejected agents $R$ be empty at the start and consider the agents in order of the baseline ordering $\succ_\pi$ from lowest to highest priority.
	When considering agent $i$, add $i$ to the rejected agents $R$ if and only if $\ms(B_I^{-(R\cup\{i\})}) = \ms(B_I)$.
	After the last agent has been considered, let $R_I$ be the final set of rejected agents and choose a maximum size matching of the reduced reservation graph $B_I^{-R_I}$.
\end{quoting}

The methodology of the \rr rule is different from the horizontal envelop rule of \citet{SoYe20a} and the Smart Reserves rule of \citet{PSU+20a}. In contrast to iteratively or instantly selecting agents that will be matched, the \rr rule goes in the reverse order of the baseline ordering and decides which agents to reject. More importantly, the \rr rule works for heterogeneous priorities.

				   \begin{example}[Illustration of the \rr rule] \label{eg2:rr}
			   	Consider an instance with
			   	\[
			   	N = \{1,2,3,4\}, \quad C = \{c_1,c_2\},\quad q_{c_1} = q_{c_2} = 1
			   	\]
			   	The priorities are $1 \succ_{c_1} 4  \succ_{c_1} 2 \succ_{c_1} \emptyset$ and $1 \succ_{c_2} 3 \succ_{c_2} \emptyset$.
			   	For $1 \succ_\pi 2 \succ_\pi 3 \succ_\pi 4$, let us simulate the \rr rule. The reservation graph $B_I$ is depicted in \Cref{fig:eg2-1}.
					It has a maximum size matching of size 2. 
					First agent 4 is considered.
					Since the graph $B_I^{-\{4\}}$ depicted in \Cref{fig:eg2-2} admits a matching of size 2, agent 4 is placed in $R$. 
					Next, agent 3 is considered and not placed in $R$ since the graph $B_I^{-\{3,4\}}$ depicted in \Cref{fig:eg2-3} does not admit a matching of size 2.
					The graph $B_I^{-\{2,4\}}$ depicted in \Cref{fig:eg2-4} admits a matching of size 2.
					Hence, agent 2 is placed in $R$. 
					Lastly, since $B_I^{-\{1,2,4\}}$ depicted in \Cref{fig:eg2-5} does not admit a matching of size 2, agent 1 is not placed in $R$.
	The final outcome is a maximum size matching of the graph $B_I^{-\{2,4\}}$ as depicted in \Cref{fig:eg2-4}.
\begin{figure}[]
	\centering
	\def\dist{1.2}
	\begin{subfigure}[t]{0.32\textwidth}
		\centering	   	    		\begin{tikzpicture}[-,>=stealth',shorten >=1pt,auto,node distance=3cm,
					   	   	    	        thick,main node/.style={circle,fill=white!20,draw,minimum size=0.5cm,inner sep=0pt]}, scale=0.9]

					   	   	    	    \node[main node] at (0, 0) (1)  {$1$};
					   	   	    	    \node[main node] at (0, -\dist) (2)  {$2$};
					   	   	    	    \node[main node] at (0, -2*\dist) (3)  {$3$};
										    \node[main node] at (0, -3*\dist) (4)  {$4$};

					   	   	    	    \node[main node] at (3.5*\dist, -\dist) (c1)  {$c_1$};
					   	   	    	    \node[main node] at (3.5*\dist, -2*\dist) (c2)  {$c_2$};
						
												\node[] at (3.5*\dist, -2.6*\dist) {{$q_{{c_2}=1}$}};
						
		
						\node[] at (3.5*\dist, -0.4*\dist) {{$q_{{c_1}=1}$}};
		

\draw[dotted] (1) -- (c1);					   	 \draw[dotted] (2) -- (c1);
\draw[dotted] (4) -- (c1);
	\draw[dotted] (1) -- (c2);
	\draw[dotted] (3) -- (c2);

					   	   	    	\end{tikzpicture}
			\caption{$B_I$
			}
				   			\label{fig:eg2-1}
			\end{subfigure}
			\begin{subfigure}[t]{0.32\textwidth}	
		\centering							   	   	    	\begin{tikzpicture}[-,>=stealth',shorten >=1pt,auto,node distance=3cm,
									   	   	    	        thick,main node/.style={circle,fill=white!20,draw,minimum size=0.5cm,inner sep=0pt]}, scale=0.9]

									   	   	    	    \node[main node] at (0, 0) (1)  {$1$};
									   	   	    	    \node[main node] at (0, -\dist) (2)  {$2$};
									   	   	    	    \node[main node] at (0, -2*\dist) (3)  {$3$};

									   	   	    	    \node[main node] at (3.5*\dist, -1*\dist) (c1)  {$c_1$};
									   	   	    	    \node[main node] at (3.5*\dist, -2*\dist) (c2)  {$c_2$};
						
																\node[] at (3.5*\dist, -2.6*\dist) {{$q_{{c_2}=1}$}};
						
		
										\node[] at (3.5*\dist, -0.4*\dist) {{$q_{{c_1}=1}$}};
		

				\draw[dotted] (1) -- (c1);					   	 
					\draw[dotted] (1) -- (c2);
					\draw[dotted] (3) -- (c2);

									   	   	    	\end{tikzpicture}
				   	   		\caption{$B_I^{-\{4\}}$ 
			}
				   			\label{fig:eg2-2}
			\end{subfigure}
			\begin{subfigure}[t]{0.32\textwidth}	
				\centering
				 \begin{tikzpicture}[-,>=stealth',shorten >=1pt,auto,node distance=3cm,
									   	   	    	        thick,main node/.style={circle,fill=white!20,draw,minimum size=0.5cm,inner sep=0pt]}, scale=0.9]

									   	   	    	    \node[main node] at (0, 0) (1)  {$1$};
									   	   	    	    \node[main node] at (0, -\dist) (2)  {$2$};

									   	   	    	    \node[main node] at (3.5*\dist, -1*\dist) (c1)  {$c_1$};
									   	   	    	    \node[main node] at (3.5*\dist, -2*\dist) (c2)  {$c_2$};
						
																\node[] at (3.5*\dist, -2.6*\dist) {{$q_{{c_2}=1}$}};
						
		
										\node[] at (3.5*\dist, -0.4*\dist) {{$q_{{c_1}=1}$}};
		

				\draw[dotted] (1) -- (c1);					   	 
					\draw[dotted] (1) -- (c2);

									   	   	    	\end{tikzpicture}
									   	   		\caption{$B_I^{-\{3,4\}}$ 
								}
									   			\label{fig:eg2-3}
			\end{subfigure}
			\par\bigskip
			\begin{subfigure}[t]{0.32\textwidth}	
				\centering	   	   	    	\begin{tikzpicture}[-,>=stealth',shorten >=1pt,auto,node distance=3cm,
									   	   	    	        thick,main node/.style={circle,fill=white!20,draw,minimum size=0.5cm,inner sep=0pt]}, scale=0.9]

									   	   	    	    \node[main node] at (0, 0) (1)  {$1$};
									   	   	    	    \node[main node] at (0, -2*\dist) (3)  {$3$};

									   	   	    	    \node[main node] at (3.5*\dist, -1*\dist) (c1)  {$c_1$};
									   	   	    	    \node[main node] at (3.5*\dist, -2*\dist) (c2)  {$c_2$};
						
																\node[] at (3.5*\dist, -2.6*\dist) {{$q_{{c_2}=1}$}};
						
		
										\node[] at (3.5*\dist, -0.4*\dist) {{$q_{{c_1}=1}$}};
		

				\draw[dotted] (1) -- (c1);					   	 
					\draw[dotted] (1) -- (c2);
					\draw[dotted] (3) -- (c2);

									   	   	    	\end{tikzpicture}
		
									   	   		\caption{$B_I^{-\{2,4\}}$ 
								}
									   			\label{fig:eg2-4}
			\end{subfigure}
			\begin{subfigure}[t]{0.32\textwidth}	
				\centering
									   	   	    	\begin{tikzpicture}[-,>=stealth',shorten >=1pt,auto,node distance=3cm,
									   	   	    	        thick,main node/.style={circle,fill=white!20,draw,minimum size=0.5cm,inner sep=0pt]}, scale=0.9]

									   	   	    	    \node[main node] at (0, -2*\dist) (3)  {$3$};

									   	   	    	    \node[main node] at (3.5*\dist, -1*\dist) (c1)  {$c_1$};
									   	   	    	    \node[main node] at (3.5*\dist, -2*\dist) (c2)  {$c_2$};
					
																\node[] at (3.5*\dist, -2.6*\dist) {{$q_{{c_2}=1}$}};
					
	
										\node[] at (3.5*\dist, -0.4*\dist) {{$q_{{c_1}=1}$}};
	

					\draw[dotted] (3) -- (c2);

									   	   	    	\end{tikzpicture}
	
									   	   		\caption{$B_I^{-\{1,2,4\}}$
								}
									   			\label{fig:eg2-5}
			\end{subfigure}
			\caption{Graphs for the instance $I$ in \Cref{eg2:rr}.}
		\end{figure}
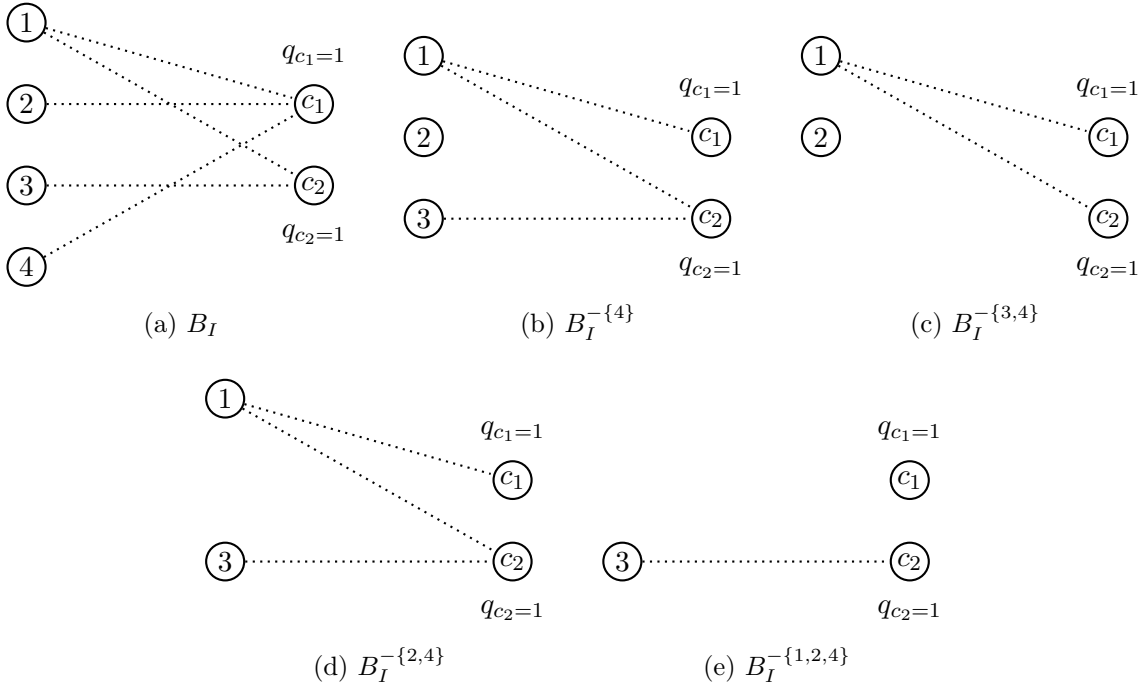
			     \end{example}


		\begin{theorem}[Properties of the \rr rule]\label{thm:allaxioms-rr}
			The \rr rule
			\begin{enumerate}[label=(\roman*)]
				\item complies with eligibility requirements, 
				\item respects priorities,  
				\item returns a matching of maximum size among feasible matchings, 
				\item is weakly non-bossy,
				\item is strategyproof, and
				\item can be computed in strongly polynomial time. 
			\end{enumerate}
		\end{theorem}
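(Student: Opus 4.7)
My plan is to establish the six properties of the theorem one at a time, leveraging a bipartite-matching viewpoint of the rule. The easy properties are (i), (iii), and (vi): compliance with eligibility requirements will hold because the reduced reservation graph $B_I^{-R}$ only contains edges $\{j,c\}$ with $j$ eligible for $c$; maximum size will follow from the invariant $\ms(B_I^{-R}) = \ms(B_I)$ maintained by the algorithm, so the final matching has size $\ms(B_I)$; strongly polynomial time holds because each of the at-most-$n$ iterations computes one bipartite maximum matching (via, e.g., Hopcroft--Karp).

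For (ii), the main step will be a lemma stating that every non-rejected agent is matched in the final matching. When an agent $a$ is considered and not added to $R$, we have $\ms(B_I^{-R \cup \{a\}}) < \ms(B_I^{-R})$; that is, $a$ lies in every maximum matching of $B_I^{-R}$. I will then observe that $B_I^{-R_I}$ is a subgraph of $B_I^{-R}$ containing $a$ and with the same value of $\ms$, which forces every maximum matching of $B_I^{-R_I}$ to also be a maximum matching of $B_I^{-R}$ and hence to include $a$. Consequently, the unmatched agents are exactly the agents of $R_I$, and respect of priorities follows: if some $j \in R_I$ had justified envy towards matched $i$ with $\mu(i) = c$, the edge $\{i, c\}$ would lie in $B_I^{-R_I}$, contradicting the edge-removal rule that deletes $\{i, c\}$ at the moment $j$ is added to $R$.

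For (iv) and (v), the plan is to prove the stronger statement that if $i \in R_I$ and $i$'s priority decreases to yield instance $I'$, then $R_{I'} = R_I$; strategyproofness and weak non-bossiness are immediate consequences. I will establish this in two sub-steps. First, $\ms(B_I) = \ms(B_{I'})$: the maximum matching $\mu$ chosen by the rule for $I$ avoids $i$ (because $i \in R_I$), its edges do not involve $i$ and are therefore present in $B_{I'}$, so $\ms(B_{I'}) \geq |\mu| = \ms(B_I)$, while the reverse inequality follows from $B_{I'} \subseteq B_I$. Second, I will argue by backward induction on the step of the algorithm that the greedy decisions on $I$ and $I'$ agree, so the intermediate rejected sets coincide. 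The critical monotonicity lemma is: for any $S \subseteq N$, if $\ms(B_I^{-S}) < \ms(B_I)$ then $\ms(B_{I'}^{-S}) < \ms(B_{I'})$. Its contrapositive follows by taking any maximum matching $\mu'$ of $B_{I'}^{-S}$ of size $\ms(B_{I'}) = \ms(B_I)$ and observing $\mu' \subseteq B_I$ (since $B_{I'} \subseteq B_I$), so $\mu'$ is a maximum matching of $B_I$ avoiding $S$. The converse direction of the greedy correspondence uses that $R_I$ is itself independent in the avoidability structure of $I'$---witnessed by the same $\mu$---so every subset of $R_I$ is also independent and every prefix-addition in the greedy corresponding to an element of $R_I$ is safe.

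The main obstacle I anticipate is the interplay between the two ways in which $B_{I'}$ and $B_I$ differ: the eligibility of $i$ can shrink, and the edges removed by rejecting $i$ can change. The clean path sketched above sidesteps this by comparing only global matchings of $B_I$ and $B_{I'}$ (whose inclusion $B_{I'} \subseteq B_I$ is unambiguous) rather than the intermediate reduced graphs $B_I^{-S}$ and $B_{I'}^{-S}$, whose inclusion relationship actually reverses when $i \in S$ because weaker $i$-domination removes fewer edges. Pinning down this contrapositive argument carefully is where the proof of strategyproofness really hinges; the remaining properties are either built into the rule or are straightforward consequences of the matroid-like structure of the family of avoidable rejected sets.
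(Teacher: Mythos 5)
Your treatments of (i), (iii), and (vi) are correct and match the paper. The problems are in (ii), (iv), and (v). For (ii), the step ``$\ms(B_I^{-(R\cup\{a\})}) < \ms(B_I^{-R})$; that is, $a$ lies in every maximum matching of $B_I^{-R}$'' is a non sequitur: $B_I^{-(R\cup\{a\})}$ is not $B_I^{-R}$ with the vertex $a$ deleted, because adding $a$ to the rejected set also deletes every edge $\{j,c\}$ with $a\succ_c j$. A maximum matching of $B_I^{-R}$ can avoid $a$ entirely while using such an edge, in which case $a$ is still not rejected. Take $N=\{1,2\}$, one category $c$ with $q_c=1$, $1\succ_c 2\succ_c\emptyset$, and baseline $2\succ_\pi 1$: agent $1$ is considered first with $R=\emptyset$ and is not rejected (since $B_I^{-\{1\}}$ has no edges), yet $\{\{2,c\}\}$ is a maximum matching of $B_I$ that avoids agent $1$. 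Since your ``hence to include $a$'' rests on this false claim, the argument that all non-rejected agents are matched collapses. The conclusion is true, but it needs a different device: the paper picks, among the maximum matchings of $B_I^{-R_I}$, one that is Pareto optimal with respect to the priorities; a swap argument shows it has no justified envy within $N\setminus R_I$, so if it left some $j\in N\setminus R_I$ unmatched it would be a matching of $B_I^{-(R_I\cup\{j\})}$ of size $\ms(B_I)$, contradicting that $j$ was not rejected.

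For (iv) and (v), your key lemma ``if $i\in R_I$ and $i$'s priority decreases then $R_{I'}=R_I$'' is false, and the paper's remark immediately after the theorem is a counterexample: there $R_I=\{2,4\}$, but after agent $4$ hides her eligibility for $c_1$ one gets $R_{I'}=\{3,4\}$. The backward induction cannot be pushed past agent $i$ for exactly the reason you flag at the end: once $i\in S$, the weaker priority of $i$ removes fewer edges, so $B_{I'}^{-S}\supseteq B_I^{-S}$, and agents above $i$ in the baseline ordering can become rejectable in $I'$ although they were not in $I$. Your proposed sidestep---arguing with maximum matchings of $B_I$ that merely avoid the vertices of $S$---does not repair this, because the rejection test is $\ms(B_I^{-S})=\ms(B_I)$, and a maximum matching of $B_I$ avoiding the vertices of $S$ need not be a matching of $B_I^{-S}$ (it may use edges dominated by members of $S$). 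What is actually provable, and all that strategyproofness and weak non-bossiness (as defined, relative to agents below $i$ in the baseline ordering) require, is the weaker statement $R_I^i=R_{I'}^i$ together with $i\in R_{I'}$; this is what the paper establishes by induction from the bottom of the baseline ordering up to, but not beyond, agent $i$.
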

		\begin{proof}
			\begin{enumerate}[label=(\roman*)]
				
			\item Compliance with eligibility requirements. 
			The outcome is a matching of $B_I^{-R_I}$, which is a subgraph of $B_I$.
			Any edge $\{i,c\}$ of $B_I$ is such that $i\succ_c \emptyset$. 
			Therefore the outcome satisfies eligibility requirements.
				
				\item Respect of priorities. 
				Let $\mu$ be the matching returned by the \rr rule for the instance $I$.
				First, we claim that $\mu$ provides no justified envy for any agent in $R_I$. 
				If $j\in R_I$, $i\in N$, and $c\in C$ with $\mu(i) = c$ and $j\succ_c i$, then $\{i,c\}$ was deleted from the reservation graph when rejecting $j$ and is thus not an edge of $B_I^{-R_I}$.
				Since $\mu$ is a matching of $B_I^{-R}$, it does not match $i$ to $c$.

				Second, we show that $\mu$ matches all agents in $N \setminus R_I$.
				In particular, no agent in $N\setminus R_I$ can have (justified) envy.
				Assume for contradiction that $|\mu| < n - |R_I|$.
				This means that $\ms(B_I^{-(R_I\cup\{i\})}) < \ms(B_I) = \ms(B_I^{-R_I}) = |\mu|$ for all $i\in N\setminus R_I$.
				Among all maximum size matchings of $B_I^{-R_I}$, let $\mu'$ be one that is Pareto optimal with respect to the priorities.
				If there were $i,j\in N\setminus R_I$ so that $j$ has justified envy towards $i$ in $\mu'$ ($\mu'(i) = c$, $\mu'(j) = \emptyset$, and $j\succ_c i$), then the matching $\mu' \setminus \{\{i,c\}\}\cup\{\{j,c\}\}$ would Pareto dominate $\mu'$ with respect to the priorities, which cannot be.
				Thus, $\mu'$ respect priorities, so that for $j\in N \setminus R_I$ with $\mu'(j) = \emptyset$, $\mu'$ is a matching of $B_I^{-(R_I\cup\{j\})}$.
				It follows that $\ms(B_I^{-(R_I\cup\{j\})}) = \ms(B_I^{-R_I})$, which contradicts that the \rr rule rejects none of the agents in $N\setminus R_I$.
			
				
				\item Maximum size matching. The returned matching is by construction a matching of size $\ms(B_I)$ and hence of maximum size among all matching complying with the eligibility requirements.

				\item Weak non-bossiness. 
				For simplicity, suppose the baseline ordering $\succ_\pi$ is $1\succ_\pi 2\succ_\pi \dots \succ_\pi n$.
				Let $i\in N$, $I,I'$ be instances so that $i$'s priority decreases from $I$ to $I'$.
				Assume that $i$ is unmatched in $\rr(I)$.
				Note that $\ms(B_I) = \ms(B_{I'})$ since the edge set of $B_{I'}$ is a subset of the edge set of $B_I$ and $\rr(I)$ does not match $i$.
				Let $R_I^i = \{j\in R_I\colon i\succ_\pi j\}$ be the agents who are unmatched in $\rr(I)$ and are lower in the baseline ordering than $i$. 
				Define $R_{I'}^i$ similarly.
				We need to show that $R_I^i = R_{I'}^i$.
				To this end, we prove by induction that $j\in R_{I}^i$ if and only if  $j\in R_{I'}^i$ for $j=n$ to $i+1$ 
				
				The base case is trivial since the set of rejected agents is the empty set at the start of the algorithm under both instances $I$ and $I'$. 
		
				For the induction step, let $j > i$ and suppose $j'\in R_{I}^i$ if and only if  $j'\in R_{I'}^i$ for $j' = n$ to $j+1$. 
							We prove that $j\in R_{I}^i$ if and only if  $j\in  R_{I'}^i$. 
							Denote by $R$ the set of rejected agents at the beginning of the round where agent $j$ is considered. 
							Note that $R=R_{I}^i\cap \{j+1,\dots,n\} = R_I \cap \{j+1,\dots,n\}$.
				   		If $j\in R_I^i$, this is because there exists a matching $\mu$ of $B_I^{-(R\cup \{j\})}$ 
						with $|\mu| = \ms(B_I)$. 
						Since $R_I\cap \{j,\dots,n\} \subset R\cup \{j\}$, $\mu$ does not match any agent in $R_I\cap \{j,\dots,n\}$.
						Since $i$ is unmatched in $\rr(I)$, it follows that $i \in R_I$, and we can assume that $\mu$ does not match $i$.
				Now consider $B_{I'}^{-(R\cup \{j\})}$. 
				For each $k\in N\setminus (R\cup \{j\} \cup \{i\})$ and $c\in C$, $\{k,c\}$ is an edge of $B_{I}^{-(R\cup \{j\})}$ if and only if $\{k,c\}$ is an edge of $B_{I'}^{-(R\cup \{j\})}$ since the priorities for all agents other than $i$ are the same in $I$ and $I'$.
				It follows that $\mu$ is also a feasible matching of $B_{I'}^{-(R\cup \{j\})}$. 
				Therefore $j\in R_{I'}^i$.

				   		For the other direction, suppose $j\in R_{I'}^i$. 
							Hence, there exists a matching $\mu$ of $B_{I'}^{-(R\cup \{j\})}$ with $|\mu| =  \ms(B_{I'}) = \ms(B_I)$. 
							As above, for each $k\in N\setminus (R\cup \{j\} \cup \{i\})$ and $c\in C$, $\{k,c\}$ is an edge of $B_{I}^{-(R\cup \{j\})}$ if and only if $\{k,c\}$ is an edge of $B_{I'}^{-(R\cup \{j\})}$.
				For $i$ and any $c\in C$, note that if $\{i,c\}$ is an edge of $B_{I'}^{-(R\cap \{j\})}$, 
				then $\{i,c\}$ is an edge of $B_{I}^{-(R\cup \{j\})}$. 
				Hence $\mu$ is also a feasible matching of $B_{I}^{-(R\cup \{j\})}$. 
				Thus, $j\in R_{I'}^i$.
				
				This completes the proof that $R_{I}^i=R_{I'}^i$.
				We remark that the argument above breaks down for agents $j$ with $j \succ_\pi i$ since if $i\in S$, not all edges of $B_{I'}^{-S}$ are also edges of $B_I^{-S}$.

				\item Strategyproofness.  
				Suppose agent $i$'s priority decreases from $I$ to $I'$ and $i$ is unmatched in $\rr(I)$.
				As shown in the proof for respect of priorities, this is equivalent to $i\in R_I$.
				We prove that $i\in R_{I'}$. 
				By the proof of weak non-bossiness above, $R_I^{i}=R_{I'}^i$.
				Since $i\in R_I$, it follows that $\ms(B_I^{-(R_I^i\cup \{i\})})=\ms(B_I)$.
				Note that $B_{I}^{-(R_I^i\cup \{i\})}$ is a subgraph of $B_{I'}^{-(R_{I'}^i\cup \{i\})}$ since $R_I^{i}=R_{I'}^i$ and $i$'s priority decreases from $I$ to $I'$.
				It follows that $\ms(B_{I'}^{-(R_{I'}^i\cup \{i\})}) = \ms(B_{I}^{-(R_I^i\cup \{i\})}) = \ms(B_I)$. 
				Therefore $i\in R_{I'}$.

\item Polynomial time computability. 
The rule makes at most $n$ calls to computing a maximum size matching of the underlying reservation graphs thus runs in strongly polynomial time. 

			\end{enumerate}
			\end{proof}	
			
			\begin{remark}[The \rr rule is not non-bossy] 
				Consider an instance with
				\[
				N = \{1,2,3,4\}, \quad C = \{c_1,c_2\},\quad q_{c_1} = q_{c_2} = 1
				\]
				The priorities are $1 \succ_{c_1} 4  \succ_{c_1} 2 \succ_{c_1} \emptyset$ and $1 \succ_{c_2} 3 \succ_{c_2} \emptyset$.
				For $1 \succ_\pi 2 \succ_\pi 3 \succ_\pi 4$, the \rr rule yields the matching $\mu = \{\{1,c_1\},\{3,c_2\}\}$. 
				If agent $4$ reports that they are not eligible for $c_1$, agent 2 moves to the second equivalence class in the priority order of $c_1$ and the \rr rule yields the matching $\mu' = \{\{1,c_2\},\{2,c_1\}\}$.
				Since $\mu\neq\mu'$, the \rr rule violates non-bossiness. 
			\end{remark}
			
\begin{remark}[Relation to school choice]
	\Cref{thm:allaxioms-rr} can be rephrased in the context of school choice~\citep[see, e.g.,][]{AbSo03b}.
	Agents correspond to students and categories correspond to schools.
	Each student finds a subset of schools acceptable and is indifferent between all acceptable schools.
	The schools have priorities over the students. 
	Then, \Cref{thm:allaxioms-rr} reads as follows. 
	\begin{theorem*} 
		Consider the school choice problem where the students partition the schools into acceptable and unacceptable schools.
		Then, there is an allocation rule that only matches students to acceptable schools, admits no justified envy, is non-wasteful, matches the maximal feasible number of students, and is strategyproof for students. 		
				\end{theorem*}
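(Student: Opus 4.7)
The plan is to treat this as a direct corollary of \Cref{thm:allaxioms-rr}, obtained by a syntactic translation from the school choice problem to the healthcare rationing model. Given a school choice instance, I would identify each student with an agent, each school with a category, carry over the per-school quotas as per-category quotas, and extend each school's priority ranking to a priority $\succ_c$ on $N\cup\{\emptyset\}$ by placing every unacceptable student strictly below $\emptyset$. Under this correspondence, a student finds a school acceptable if and only if the student is eligible for the corresponding category, and the school's priority agrees with $\succ_c$ on its acceptable students.

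Running the \rr rule on this rationing instance produces the desired allocation rule. The first four axioms read off from \Cref{thm:allaxioms-rr} essentially verbatim: compliance with eligibility requirements ensures that students are only matched to schools they find acceptable; respect of priorities is exactly justified-envy-freeness in the school choice sense; the maximum-size guarantee says that the returned matching has maximum cardinality among all eligibility-complying matchings, which is both the ``maximal feasible number of students matched'' claim and, as shown in the proof of part (iii) of \Cref{thm:allaxioms-rr}, implies non-wastefulness.

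The only axiom requiring a translation argument is strategyproofness. In the dichotomous-preference school choice setup, a student's report is simply her set of acceptable schools, and the rule never matches a student to a school outside her reported list. Omitting a school from the reported acceptable set is precisely a decrease of the student's priority at that school (she drops strictly below $\emptyset$), so \Cref{thm:allaxioms-rr}(v) guarantees that no unmatched student becomes matched after such a deviation. The main obstacle is minor and is entirely in checking that the paper's restricted strategyproofness notion captures exactly the manipulations that could be profitable here: adding truly unacceptable schools to the reported list can only lead to being matched to a school the student strictly prefers to leave vacant, so the only potentially helpful misreport is to hide acceptable schools, which is precisely the case ruled out by \Cref{thm:allaxioms-rr}(v).
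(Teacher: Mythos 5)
Your proposal is correct and takes essentially the same route as the paper: the paper presents this statement as a direct rephrasing of \Cref{thm:allaxioms-rr} under exactly the translation you describe (students as agents, schools as categories, acceptability as eligibility, quotas carried over), with no further argument given. Your extra care in observing that hiding an acceptable school is a priority decrease in the sense of the model, and that reporting additional unacceptable schools cannot be profitable under dichotomous preferences, is a sound elaboration of what the paper leaves implicit.
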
	
\end{remark}
			
			Note that among the matchings that comply with the eligibility requirements, respect priorities, and have maximum size among feasible matchings, the \rr rule returns one that matches the set of agents that is maximal according to the upward lexicographic ordering on subsets of agents induced by the baseline ordering. 
	Hence, the outcome of the \rr rule depends heavily on the baseline ordering.  
We show that in fact, \emph{every} matching satisfying these three properties is an outcome of the \rr rule for some baseline ordering.
Thus, these properties characterize all possible outcomes of the \rr rule.
			   
		\begin{theorem}[Characterization of \rr outcomes]\label{thm:rr-charac}
			A matching complies with the eligibility requirements, respects priorities, and has maximum size among feasible matchings if and only if it is a possible outcome of the \rr rule for some baseline ordering.
			\end{theorem}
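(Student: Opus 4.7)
The forward direction is immediate from parts (i)--(iii) of Theorem~\ref{thm:allaxioms-rr}: any output of the \rr rule satisfies the three properties regardless of the baseline ordering, so I focus on the reverse direction.

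Let $\mu$ be a matching that complies with eligibility requirements, respects priorities, and has maximum size. Set $U = \{i \in N : \mu(i) = \emptyset\}$ and $M = N \setminus U$. Because each agent is incident to at most one edge of any matching, $\ms(B_I) = |\mu| = |M|$. My plan is to pick any baseline ordering $\succ_\pi$ in which every agent of $M$ is ranked strictly above every agent of $U$, and to show that under this ordering the \rr rule produces rejected set $R_I = U$, with $\mu$ being a permissible maximum size matching of $B_I^{-U}$.

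The key structural observation, which I expect to do the heavy lifting, is that $\mu$ is a matching of $B_I^{-S}$ for every $S \subseteq U$. Eligibility compliance gives $k \succ_c \emptyset$ for every $\{k,c\} \in \mu$, and respect of priorities rules out $j \succ_c k$ for any $j \in U \supseteq S$ (since $\mu(j) = \emptyset$ and $\mu(k) = c$). Hence no edge of $\mu$ is deleted when passing from $B_I$ to $B_I^{-S}$, and in particular $\ms(B_I^{-S}) = \ms(B_I)$ for every $S \subseteq U$.

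With this in hand the simulation is routine. When the \rr rule processes the agents of $U$ first (in any internal order), the current rejected set $R$ always lies in $U$, so the observation above yields $\ms(B_I^{-(R \cup \{i\})}) = \ms(B_I)$ for the agent $i \in U$ under consideration, and $i$ is rejected; hence after $U$ is exhausted, $R = U$. Processing $M$ next, I argue by induction that $R$ stays equal to $U$: when $i \in M$ is considered with $R = U$, the graph $B_I^{-(U \cup \{i\})}$ has agent side $M \setminus \{i\}$, so any of its matchings has at most $|M| - 1 < \ms(B_I)$ edges, and $i$ is not added to $R$. Thus $R_I = U$, and by the structural observation $\mu$ itself is a maximum size matching of $B_I^{-U}$, so it is a legitimate output of the \rr rule for this baseline ordering.
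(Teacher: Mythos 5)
Your proof is correct and follows essentially the same route as the paper's: choose a baseline ordering that places all matched agents above all unmatched ones, observe that eligibility compliance and respect of priorities keep $\mu$ intact in $B_I^{-S}$ for any $S$ of unmatched agents, and conclude that the rejected set is exactly the set of agents unmatched by $\mu$. Your step-by-step simulation of the rule merely spells out in more detail the paper's terser argument that $N\setminus S\subseteq R_I$ and that the inclusion cannot be strict.
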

			\begin{proof}
				Consider a matching $\mu$ with the three properties. Suppose it matches the set of agents $S\subseteq N$. Our first claim is that $\mu$ is feasible matching of $B_{I}^{-(N\setminus S)}$. Since $\mu$ satisfies the eligibility requirements, it is a matching of the graph $B_I$ constrained to the vertex set $S\cup C$.
				Since $\mu$ respects priorities, there exists no edge $\{i,c\}\in \mu$ such that $j\succ_c i$ for some $j\in N\setminus S$. Therefore, it follows that $\mu$ is a matching of $B_{I}^{-(N\setminus S)}$. 
				
				Now consider a baseline ordering $\succ_\pi$ such that $i\succ_\pi j$ for all $i\in S$ and $j\in N\setminus S$.  
				We claim that $R_I = N\setminus S$.
				Since $|\mu| = \ms(B_I)$ by assumption and $\mu$ is a matching of $B_I^{-(N\setminus S)}$ as shown above, it follows that $N\setminus S\subset R_I$.
				The inclusion cannot be strict since then $\ms(B_I^{-R_I}) < |\mu| = \ms(B_I)$.
				Thus, $R_I = N\setminus S$.
				Since $\mu$ is a maximum size matching of $B_I^{-(N\setminus S)}$, it is a possible outcome of the \rr rule.
				
				
				The converse follows from \Cref{thm:allaxioms-rr}.
				\end{proof}


	\section{Treating Reserved and Unreserved Units Asymmetrically}

	We have thus far treated all categories symmetrically.
	Now we designate one category $c_u\in C$ as the \emph{unreserved category}.
	All agents are eligible for the unreserved category and the priority ranking for the unreserved category equals the baseline ordering $\succ_\pi$.
	We refer to the units reserved for the unreserved category as \emph{unreserved units} and call the remaining categories \emph{preferential categories}. {The set of preferential categories is denoted by $C_p$.}
	There are two reasons for introducing the unreserved category. 
	Firstly, we want to consider maximum beneficiary assignments, which maximize the number of agents matched to preferential categories.\footnote{{In our context, the combination of maximum beneficiary assignment and non-wastefulness implies maximum size.}}
	\begin{definition}[Maximum beneficiary assignment]\label{def:maximumb}
	A matching $\mu$ is a \emph{maximum beneficiary assignment} if it maximizes the number of agents matched to a preferential category.
	\end{definition}
	

	Secondly, in various reserves problems, unreserved units are treated in special ways such as being allocated later or earlier. 
	We discuss both of these issues.

	
We observe that applying the \rr rule to the preferential categories $C_p$ and then allocating the unreserved units among the unmatched agents, say, according to the baseline ordering, yields a maximum beneficiary assignment.



	%
	


\subsection{Order Preservation}
	Certain policy goals may require allocating a designated number of unreserved units \emph{before} allocating the units reserved for preferential categories. For example, the rationale for the ``over-and-above'' reserve approach is that agents first get a bite at the designated unreserved units before they utilize the preferential category units for which they are eligible.
	By contrast, the ``minimum-guarantees'' reserve approach first assigns agents to preferential categories and then matches the remaining agents to the unreserved units.
	We first define minimum-guarantees and over-and-above reservation rules~\citep[Chapter 13, Part B]{Gala84a} when the agents are eligible for at most one preferential category and all categories have priorities that are consistent with the baseline ordering in the sense that the agents that are eligible for a category are ranked according to the baseline ordering.\footnote{\citet{Gala61a,Gala84a} was one of the first to study the differences between the minimum-guarantees and over-and-above reservation rules in depth.}

	\paragraph{Minimum-guarantee} Consider the agents in the order of the baseline ordering. 
	For each agent, match her to a preferential category if $(i)$ the agent is eligible for a preferential category and $(ii)$ not all units reserved for this category have been allocated. 
	Otherwise match the agent to the unreserved category if an unreserved unit remains.\footnote{{There is another version of the minimum-guarantees rule called the \emph{Partha method} that gives an equivalent outcome but operates differently as an algorithm. In the Partha method, the units are allocated according to the baseline ordering (``merit'') and preferential reservation is only enforced if the reserves are not maximally used~\citep{Gala84a}.}}


	\paragraph{Over-and-Above} 
	Consider the agents in the order of the baseline ordering. 
	For each agent $i$, match her to the unreserved category only if $(i)$ an unreserved unit remains and $(ii)$ if agent $i$ is eligible for some preferential category, say, $c$, then there still are at least $q_c$ agents from $N_c$ other than $i$ who are unmatched.
	Then, fill up the preferential categories as follows: for each $c\in C_p$, the $\min\{q_c, |N_c|\}$ highest priority unmatched agents are given one unit each from $c$.

	\medskip

	{We present an example adapted from the book of \citet{Gala84a} to illustrate the difference between the minimum-guarantees and the over-and-above approach.\footnote{\citet{Gala84a} studied minimum-guarantees and over-and-above in the context of job allocation in India where the baseline ordering represents the merit ranking and the preferential categories are historically disadvantaged groups. He observes that the minimum-guarantees rule ``insures that the amount of effective reservation is somehow commensurate with the backwardness that inspired it.'' On the other hand, he observes that the minimum-guarantees rule may ``overstate the effective amount of reservation'' especially if the disadvantaged groups are doing well enough on merit (page 461).}}
	\begin{example}
	Consider the case where $N=\{1,2,3,4\}$, $C=\{c,c_u\}$, $q_c=q_{c_u}=1$, $N_c=\{1,4\}$, and $1\succ_\pi2\succ_\pi3\succ_\pi4$.
	The outcome of the minimum-guarantees rule is that agent $1$ and $2$ are selected and the matching is $\{\{1,c\},\{2,c_u\}\}.$
 The outcome of the over-and-above rule is that agents $1$ and $4$ are selected and the matching is $\{\{1,c_u\},\{4,c\}\}$.
 In the example, the minimum-guarantees rule coincides with the rule that solely uses the baseline ordering. On the other hand, the over-and-above rule provides additional representation of agents from the preferential category~$c$.
	\end{example}
	
	We note that the minimum-guarantees approach allocates the unreserved units at the end whereas the over-and-above approach allocates the unreserved units first.
We now explicitly distinguish between unreserved units that are processed earlier and later. To be precise, let $C = C_p \cup\{c_u^1,c_u^2\}$, where $c_u^1$ represents the unreserved units to be treated first and $c_u^2$ the unreserved units to be treated at the end. 
	We view $c_u^1$ and $c_u^2$ as subcategories of the unreserved category $c_u$ and the $q_{c_u}$ slots for $c_u$ as being partitioned into $q_{c_u^1}$ slots for $c_u^1$ and $q_{c_u^2}$ slots for $c_u^2$.
	Hence, $q_{c_u^1}+q_{c_u^2}=q_{c_u}$ and ${\succ_{c_u^1}}={\succ_{c_u^1}}={\succ_\pi}$. 
	If an agent is matched to $c_u^1$ or $c_u^2$, we say that she receives an unreserved unit. 

	
	\citet{PSU+20a} formulated a family of rules, called \emph{Smart Reserves rules}, that allow agents to be eligible for multiple preferential categories and generalize the minimum-guarantee and over-and-above rules to this case.
	In this section, we capture these approaches via an axiom for matchings called \emph{order preservation} and then propose a new rule that also works for heterogeneous priorities. Order preservation is parametrized by the number of unreserved units that are placed in category $c_u^1$ and $c_u^2$. {It captures the idea that there is an order of the categories ($c_u^1$, $C_p$, and then $c_u^2$) and no two agents should be able to swap their matches so that eligibility requirements are not violated, and an earlier category gets a higher priority agent after the swap.}

	\begin{definition}[Order preservation]\label{def:orderpreserve}
		Consider a matching $\mu$ of agents to categories in $C_p\cup \{c_u^1, c_u^2\}$. We say that $\mu$ is \emph{order preserving} (with respect to $c_u^1$ and $c_u^2$) for baseline ordering $\succ_\pi$ if 
		for any two agents $i,j\in N$,
		\begin{enumerate}[label=(\roman*)]
			\item $\mu(i) \in C_p\cup\{c_u^2\}$, $i\succ_{\mu(j)}j$, and $j$ is eligible for category $\mu(i)$ implies $\mu(j) \neq c_u^1$, and \label{item:preservation1}
			\item $\mu(j)\in C_p\cup \{c_u^1\}$, $i\succ_{\mu(j)}j$, and $i$ is eligible for category $\mu(j)$ implies $\mu(i)\neq c_u^2$.\footnote{It follows from $i\succ_{\mu(j)} j$ that $i$ is eligible for category $\mu(j)$. We state it explicitly in \ref{item:preservation2} to maintain the symmetry with \ref{item:preservation1}.} 
			\label{item:preservation2}
		\end{enumerate}
		\end{definition}

	There are two extreme ways unreserved units can be treated under order preservation. The first is if $q_{c_u^1}=0$ and $q_{c_u^2}=q_{c_u}$. The other is if $q_{c_u^1}=q_{c_u}$ and $q_{c_u^2}=0$. 
	The conceptual contribution of Definition~\ref{def:orderpreserve} is that instead of describing over-and-above and minimum-guarantees rules as consequences of certain sequential allocation methods, order preservation captures a key property of their resulting \emph{matchings}. 
	It is formulated so that it allows for heterogeneous priorities or agents being eligible for multiple categories. 
	
	We collect some of the properties of the minimum-guarantees rule and the over-and-above rule when each agent is eligible for at most one preferential category and categories have consistent priorities.
	These properties characterize the minimum-guarantees rule for the corresponding notion of order-preservation.

	\begin{proposition}[Properties of minimum-guarantees and over-and-above]
		Assume each agent is eligible for \emph{at most one} preferential category and all categories have \emph{consistent priorities}. 
		Then, the outcome of the minimum-guarantees (over-and-above) rule
		\begin{enumerate}[label=(\roman*)]
			\item complies with the eligibility requirements,\label{item:eligible}
			\item is a maximum beneficiary assignment,
			\item respects priorities, 
			\item is non-wasteful, and
			\item satisfies order preservation for $q_{c_u^1}=0$ and $q_{c_u^2}=q_{c_u}$ ($q_{c_u^1}=q_{c_u}$ and $q_{c_u^2}=0$).\label{item:orderpreservation}
		\end{enumerate}
	Moreover, the outcome of the minimum-guarantees rule is the unique matching satisfying~\ref{item:eligible} to~\ref{item:orderpreservation} with $q_{c_u^1}=0$ and $q_{c_u^2}=q_{c_u}$.
		\end{proposition}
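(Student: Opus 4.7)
The plan is to verify each of the five properties for both rules in turn, and then to prove uniqueness for the minimum-guarantees rule. Compliance with eligibility requirements (i) and non-wastefulness (iv) are immediate from the algorithmic descriptions: agents are only matched to categories for which they are eligible, and unreserved units are distributed to remaining unmatched agents (who are trivially eligible for $c_u$) until either the units or the agents are exhausted. Respect of priorities (iii) follows from the fact that consistent priorities make $\succ_c$ agree with $\succ_\pi$ on $N_c$, and both rules process agents in baseline order when considering either preferential categories or the unreserved category.

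For maximum beneficiary assignment (ii), since each agent is eligible for at most one preferential category, the maximum achievable number of beneficiaries is $\sum_{c\in C_p}\min\{q_c, |N_c|\}$. For minimum-guarantees this bound is trivially reached because each agent in $N_c$ is offered $c$ in baseline order until $c$ is full. For over-and-above the crucial invariant to establish is that during Phase~1 the eligibility check prevents the number of unmatched agents in $N_c$ from ever dropping below $\min\{q_c,|N_c|\}$: whenever an agent from $N_c$ is matched to an unreserved unit, the check ensures at least $q_c$ other agents of $N_c$ remain unmatched, so the count stays at least $q_c$; if $|N_c|<q_c$ the check fails immediately for every member and none of them leaves $N_c$. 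Consequently $\min\{q_c,|N_c|\}$ units of $c$ can be filled in Phase~2.

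The main obstacle is order preservation (v). For minimum-guarantees with $q_{c_u^1}=0$, condition \ref{item:preservation1} is vacuous. For \ref{item:preservation2}, if $\mu(j)\in C_p$ with $i\succ_{\mu(j)}j$ and $i\in N_{\mu(j)}$, then by consistency $i\succ_\pi j$, so $i$ is processed first and matched to her unique eligible preferential category $\mu(j)$, whence $\mu(i)\in C_p$ and in particular $\mu(i)\neq c_u^2$. For over-and-above with $q_{c_u^2}=0$, condition \ref{item:preservation2} is vacuous. For \ref{item:preservation1}, suppose for contradiction that $\mu(i)\in C_p$, $\mu(j)=c_u^1$, $i\succ_{\mu(i)}j$, and $j\in N_{\mu(i)}$; write $c=\mu(i)$. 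Since $\mu(i)=c$ only arises in Phase~2, agent $i$ was not matched in Phase~1. If $c_u^1$ was already full when $i$ was processed then no later agent---including $j$---could be matched to $c_u^1$. Otherwise the eligibility check failed at $i$, meaning strictly fewer than $q_c$ other agents of $N_c$ were unmatched at that moment. Because $i$ remains unmatched through the rest of Phase~1 and the number of unmatched agents in $N_c$ can only decrease as Phase~1 continues, when $j$ is later processed there are at most $q_c-1$ other unmatched agents in $N_c$, so $j$ also fails the check, contradicting $\mu(j)=c_u^1$.

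For uniqueness of the minimum-guarantees outcome under properties (i)--(v) with $q_{c_u^1}=0$ and $q_{c_u^2}=q_{c_u}$, the disjointness of the sets $N_c$ for $c\in C_p$ together with (ii) and (iii) forces each preferential category $c$ to be matched with exactly the top $\min\{q_c,|N_c|\}$ agents of $N_c$ in the baseline ordering; non-wastefulness (iv) together with $\succ_{c_u}=\succ_\pi$ and (iii) then pins down the unreserved matches as the highest-priority remaining agents, which coincides with the minimum-guarantees output.
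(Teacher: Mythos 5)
Your verification of properties (i)--(v) is sound and in places more careful than the paper's own proof: the invariant you establish for the over-and-above rule (that the eligibility check keeps at least $\min\{q_c,|N_c|\}$ members of each $N_c$ unmatched throughout the first phase) and your explicit case analyses for order preservation are details the paper merely asserts.

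However, your uniqueness argument has a genuine gap. You claim that disjointness of the sets $N_c$ together with properties (ii) and (iii) forces each preferential category $c$ to be matched with the top $\min\{q_c,|N_c|\}$ agents of $N_c$. This is false: if $i,j\in N_c$ with $i\succ_c j$, $\mu(j)=c$, and $\mu(i)=c_u^2$, then $\mu$ can still respect priorities (respect of priorities only protects \emph{unmatched} agents) and be a maximum beneficiary assignment. Concretely, take $N=\{1,2\}$, $C_p=\{c\}$, $q_c=q_{c_u^2}=1$, $N_c=\{1,2\}$, and $1\succ_\pi 2$: the matching $\{\{2,c\},\{1,c_u^2\}\}$ satisfies (i)--(iv) yet differs from the minimum-guarantees outcome $\{\{1,c\},\{2,c_u^2\}\}$. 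Ruling out this case requires order preservation---precisely part (ii) of the order-preservation definition, which you proved holds for the minimum-guarantees outcome but did not invoke in the uniqueness step. The paper's proof handles exactly this case by observing that $i$ and $j$ could swap their matches without violating eligibility, contradicting order preservation; once you add that invocation of property (v), your uniqueness argument coincides with the paper's.
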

		\begin{proof}
			First consider the minimum-guarantees rule.
			It complies with the eligibility requirements and is non-wasteful. 
			It also yields a maximum beneficiary assignment because for each preferential category, the maximum possible number of agents is matched. 
			The unreserved units are matched later to the unmatched agents. 
			Therefore, the matching respects priorities and satisfies order preservation for $q_{c_u^1}=0$ and $q_{c_u^2}=q_{c_u}$. 
			
			We prove that there is exactly one matching satisfying~\ref{item:eligible} to~\ref{item:orderpreservation} 			with $q_{c_u^1}=0$ and $q_{c_u^2}=q_{c_u}$.
			Hence, the outcome of the minimum-guarantees rule is the unique such matching.
Suppose for contradiction there are two such matchings $\mu'$ and $\mu''$. 
			Since each agent is eligible for at most one category in $C_p$ and $\mu'$ and $\mu''$ satisfy maximum beneficiary assignment, it follows that $|\mu'(c)|=|\mu''(c)|$ for all $c \in C_p$. 
			We prove that for either of $\mu'$ and $\mu''$, the agents matched to $c\in C_p$ are the $\min\{q_c,|N_c|\}$ highest priority eligible agents for $c$. 
			Suppose this is not the case for $\mu'$. 
			Then, there exist $i,j\in N_c$ such that $\mu'(j) = c$, $\mu'(i) \neq c$, and $i\succ_c j$. 
			If $\mu'(i) = \emptyset$, then $\mu'$ does not respect priorities. 
			If  $\mu'(i) = c_u^2$, then $\mu'$ does not satisfy order preservation as $i$ and $j$ can swap their matches without violating eligibility requirements. 
			We have established that for each $c\in C_p$, $\mu'(c)=\mu''(c)$. 
			Respect of priorities, non-wastefulness, and the fact that every agent is eligible for $c_u^2$ imply that the agents matched to $c_u^2$ for either of $\mu'$ and $\mu''$ are the highest priority agents who are not matched to categories in $C_p$.
			Hence, $\mu'(c_u^2)=\mu''(c_u^2)$.
			
			Now consider the over-and-above rule.
			It complies with the eligibility requirements and is non-wasteful.
				It also yields a maximum beneficiary assignment because for each preferential category, the maximum possible number of agents are matched.
				The unreserved units are matched to the highest priority agents possible subject to enabling a maximum beneficiary assignment.
				Therefore, the matching respects priorities and satisfies order preservation for $q_{c_u^1}=q_{c_u}$ and $q_{c_u^2}=0$.
 \end{proof}

 \begin{remark}[Non-uniqueness of over-and-above]
	 The outcome of the over-and-above rule is not necessarily the unique matching satisfying~\ref{item:eligible} to~\ref{item:orderpreservation} with $q_{c_u^1}=q_{c_u}$ and $q_{c_u^2}=0$.
	Let $N=\{1,2,3,4\}$ and $C=\{c_u^1, c_1, c_2\}$ with each category having capacity 1. Suppose the priorities are  
	$1 \succ_{c_u^1} 2 \succ_{c_u^1} 3 \succ_{c_u^1} 4 \succ_{c_u^1} \emptyset $, $1 \succ_{c_1} 3 \succ_{c_1} \emptyset$, and $2 \succ_{c_2} 4 \succ_{c_2} \emptyset$. 
	Then, the outcome of the over and above rule is $\{\{1,c_u^1\}, \{3,c_1\}, \{2,c_2\}\}$. 
	Another matching that satisfies the properties is $\{\{2,c_u^1\}, \{1,c_1\}, \{4,c_2\}\}$.
 \end{remark}

	\subsection{Smart Reverse Rejecting Rule}

 	The Smart Reserves rule of \citet{PSU+20a} gives agents the  unreserved units from $c_u^1$ as long as the set of remaining agents can be matched to get a maximum beneficiary assignment. 
	Whereas the \rr rule is not equipped to handle unreserved categories, the Smart Reserves rule is not designed to handle heterogeneous priorities. 
	The ideas of both approaches can be combined to obtain a \emph{Smart Reverse Rejecting (\srr)} rule.
	It first determines which agents get an unreserved unit from $c_u^1$, then allocates the units reserved for preferential categories to the remaining agents using the \rr rule, and gives the $c_u^2$ units to the remaining agents according to the baseline ordering.
	More precisely, the \srr rule works as follows:
\begin{quoting}
	Let the set of agents to be given unreserved units from $c_u^1$, $N_1$, be empty at the start and consider the agents in order of the baseline ordering $\succ_\pi$ from highest to lowest priority.
	When agent $i$ is considered, add $i$ to $N_1$ if $N_1$ contains fewer than $q_{c_u^1}$ agents and the agents the agents in $N\setminus (N_1\cup\{i\})$ can form a maximum beneficiary assignment.
	After the last agent has been considered, give each agent in $N_1$ an unreserved unit from $c_u^1$.
	Use the \rr rule to allocate the units reserved for the preferential categories $C_p$ to the remaining agents.
	Lastly, give the unreserved units from $c_u^2$ to the remaining agents in order of the baseline ordering from highest to lowest priority.
\end{quoting}

We show that the \srr rule preserves the properties of the \rr rule while allowing to treat reserved and unreserved units asymmetrically.
\Cref{tab:comparisonofrules} summarizes the properties of the \srr rule, the Smart Reserves rule, and the approach based on the Deferred Acceptance algorithm described in \Cref{sec:da}.

					   \begin{table}[t]
					   \centering
					   \begin{tabular}{lccc}
		\toprule

		&\srr&Smart Reserves&DA\\
		\midrule
		compliance with eligibility requirements&\checkmark&\checkmark&\checkmark\\
		maximum beneficiary assignment&\checkmark&\checkmark&--\\
		respect of priorities&\checkmark &\checkmark$^*$&\checkmark\\
		strategyproofness &\checkmark&n/a&\checkmark\\
			weak non-bossiness&\checkmark&n/a&\checkmark\\
			order preservation&\checkmark&\checkmark$^*$&--\\
			polynomial-time computability&\checkmark&\checkmark&\checkmark\\
					   	\bottomrule

					   \end{tabular}
					   \caption{Properties satisfied by the \srr rule, the Smart Reserves rule of \citet{PSU+20a}, and the Deferred Acceptance algorithm described in \Cref{sec:da}. An asterisk indicates that the property holds if priorities are strict and consistent with the baseline ordering. 
					   N/a indicates that the rule assumes homogenous priorities but the property allows for changes in the priorities that may result in inhomogeneous priorities.
					   }
					   \label{tab:comparisonofrules}
					   \end{table}


			  \begin{theorem}[Properties of the \srr rule]\label{thm:allaxioms:s-rev}
			  	The \srr rule
			  	\begin{enumerate}[label=(\roman*)]
			  		\item complies with eligibility requirements, \label{item:sreveligible} 
					\item yields a maximum beneficiary assignment,\label{item:srevmaxbene}
			  		\item respects priorities,  \label{item:srevprio}
			  		\item is strategyproof,\label{item:srevstrategyproof}
					\item is weakly non-bossy, \label{item:srevnonbossy}
					\item satisfies order preservation, and\label{item:srevorderpreservation}
					\item is polynomial-time computable.\label{item:srevpolytime}
			  	\end{enumerate}
			  \end{theorem}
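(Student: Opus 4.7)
The approach is to decompose \srr into its three phases---the top-down selection of $N_1$ for $c_u^1$, the \rr call on the remaining agents and preferential categories, and the greedy $c_u^2$-fill---and to use the defining invariant of the first phase: every agent $i$ considered but not added either saw $|N_1|=q_{c_u^1}$ or was \emph{essential}, in the sense that $\ms(B_I^p|_{N\setminus(N_1\cup\{i\})})<\ms(B_I^p)$, where $B_I^p$ denotes the reservation graph restricted to preferential categories. A key fact is that such an $i$ remains essential in $B_I^p|_{N\setminus N_1^{\mathrm{final}}}$: removing $i$ yields a subgraph of $B_I^p|_{N\setminus(N_1^{(i)}\cup\{i\})}$, whose $\ms$ is already strictly below $\ms(B_I^p)=\ms(B_I^p|_{N\setminus N_1^{\mathrm{final}}})$.

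With this in hand, (i) eligibility follows from Theorem~\ref{thm:allaxioms-rr}(i) applied to the \rr phase together with universal eligibility for $c_u^1$ and $c_u^2$; (ii) is immediate because the $N_1$-selection preserves $\ms(B_I^p)$ and Theorem~\ref{thm:allaxioms-rr}(iii) then gives a max-size matching of $B_I^p|_{N\setminus N_1^{\mathrm{final}}}$; (vii) holds because the first phase does $O(n)$ max-matching computations and the second uses Theorem~\ref{thm:allaxioms-rr}(vi); and (iii) splits by category---Theorem~\ref{thm:allaxioms-rr}(ii) handles $c\in C_p$, the baseline-ordered fill handles $c_u^2$, and for $c_u^1$ any earlier-considered $j\succ_\pi i$ with $\mu(i)=c_u^1$ was either added to $N_1$ or was essential, hence matched in one of the later phases.

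The main obstacle is order preservation (vi), which I will handle by a swap argument. For condition (i), suppose for contradiction $\mu(i)\in C_p\cup\{c_u^2\}$, $\mu(j)=c_u^1$, $i\succ_{c_u^1}j$ (equivalently $i\succ_\pi j$), and $j$ is eligible for $\mu(i)$. Since $j$ enters $N_1$ strictly after $i$ was considered, the capacity was not exhausted when $i$ was rejected, so $i$ is essential, which forces $\mu(i)=c\in C_p$ and rules out $\mu(i)=c_u^2$. The restriction of $\mu$ to $C_p$ is a max matching of $B_I^p|_{N\setminus N_1^{(j)}}$ of size $\ms(B_I^p)$, and replacing $\{i,c\}$ by $\{j,c\}$ yields another max matching of the same graph that leaves $i$ unmatched, contradicting the essentiality of $i$ there. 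Condition (ii) is symmetric: if $\mu(i)=c_u^2$ and $\mu(j)=c\in C_p$, then $i$ and $j$ both lie in the \rr subgraph with $i$ unmatched, $j$ matched to $c$, and $i\succ_c j$, violating Theorem~\ref{thm:allaxioms-rr}(ii); if $\mu(j)=c_u^1$, the same essentiality argument as before rules out $\mu(i)=c_u^2$.

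Finally, (iv) strategyproofness and (v) weak non-bossiness are proved phase-by-phase after $i$'s priority decreases from $I$ to $I'$, starting from $i$ unmatched under $I$. In the first phase, the selection of $N_1$ among agents ranked above $i$ in the new baseline is unchanged because their priorities and the relevant edges of $B_{I'}^p$ agree with those of $B_I^p$, while $i$ itself remains rejected since its edge set in $B_{I'}^p$ is a subset of its edge set in $B_I^p$. Theorem~\ref{thm:allaxioms-rr}(v) applied to the ensuing \rr instance then keeps $i$ unmatched in the second phase, and $i$'s weakly lower rank in $\succ_\pi$ keeps $i$ outside the top $q_{c_u^2}$ of the residual agents in the third. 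Weak non-bossiness follows by invoking Theorem~\ref{thm:allaxioms-rr}(iv) in the second phase together with the same bookkeeping; the delicate step is the inductive verification that $N_1$ restricted to agents above $i$ and the \rr subgraph on agents below $i$ are both stable under $i$'s priority decrease, mirroring the proof of Theorem~\ref{thm:allaxioms-rr}(iv).
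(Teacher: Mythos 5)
Your plan is correct and tracks the paper's own proof almost step for step: the same three-phase decomposition, the same delegation to \Cref{thm:allaxioms-rr} for the $C_p$ phase, the same swap argument for order preservation (the paper derives the contradiction as ``$j$ would have been added to $N_1$ at its earlier step'' rather than as a violation of your essentiality invariant, but these are two readings of the same fact), and the same phase-by-phase treatment of incentives. One stated justification is wrong even though the conclusion it supports is right: in the strategyproofness argument, the $N_1$-selection test for an agent $j\succ_\pi i$ asks whether $N\setminus(N_1\cup\{j\})$ --- a set that contains $i$ --- admits a maximum beneficiary assignment, so ``the relevant edges of $B_{I'}^p$ agree with those of $B_I^p$'' is false as a reason ($i$'s edges do change, and they sit inside the graph being tested). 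The correct reason, which the paper gives and which also drops out of your own setup, is that an unmatched $i$ is inessential: some maximum beneficiary assignment avoids $i$ altogether, so deleting edges incident to $i$ changes neither $\ms(B_I^p)$ nor the outcome of any of these tests. Similarly, in phase 3, ``$i$'s weakly lower rank in $\succ_\pi$'' alone does not keep $i$ away from $c_u^2$, since a misreport could in principle free up an agent above $i$ from $C_p$; you need that the \emph{number} of residual agents above $i$ competing for $c_u^2$ is invariant, which is exactly the max-size-plus-weak-non-bossiness counting you defer to the weak non-bossiness paragraph --- it belongs in the strategyproofness argument as well.
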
 
			  
\begin{proof}
	 \ref{item:sreveligible} and \ref{item:srevmaxbene} are clear by construction.
	 Note that a maximum beneficiary assignment is a maximum size matching of agent to all categories in $C = C_p\cup\{c_u^1,c_u^2\}$ since every agent is eligible for $c_u^1$ and $c_u^2$.
	 
	 \ref{item:srevprio} 	We first prove that no unmatched agent can have justified envy for an agent matched to $c_u^1$. Suppose an unmatched agent $i$ has higher baseline priority than an agent $j$ matched to $c_u^1$. 
	 Then, $i$ would have been considered before $j$ when determining which agents get units from $c_u^1$.
	  The chosen matching shows that neither $i$ nor the agents in $N_1$ (at the time $i$ was considered) are needed a form a maximum beneficiary assignment.
	  Hence, $i$ would have been added to $N_1$ and received a unit from $c_u^1$, which is a contradiction. 
	  		
	Second, no unmatched agent can have justified envy towards an agent matched to $c_u^2$ because each unmatched agent comes later in the baseline ordering than each agent matched to $c_u^2$. 
	Finally, no unmatched agent can have justified envy towards an agent matched to a category in $C_p$ as this would 
	contradict the fact that the \rr rule respects priorities. 
	 
	 \ref{item:srevstrategyproof} 						We show that if an agent $i$ is unmatched under the \srr rule, she cannot misreport to get matched. 
	We first show that agent $i$ cannot misreport to get matched to $u_c^1$.
	Each time an agent $j$ is added to $N_1$, it is because the agents in $N\setminus (N_1\cup\{j\})$ can be matched to $C_p$ to obtain a maximum beneficiary assignment. Since $i$ is not matched under truthful reporting, $i$ is not needed to obtain a maximum beneficiary assignment even if she reports all her eligible categories, which implies that $i$ is not needed to obtain a maximum beneficiary assignment if she reports a strict subset of her eligible categories. Hence, $i$ cannot affect the selection of agents preceding her in the baseline ordering that are added to $N_1$ and hence matched to $c_u^1$. Since $i$ was not matched to $u_c^1$, it means that when $i$ was considered to be added to $N_1$, the $q_{c_u^1}$ units of $c_u^1$ had already been used up. Therefore, $i$ could not have manipulated her priorities with respect to $C_p$ to get one of them.	
								
We have shown that $i$ cannot affect the set $N_1$, that is, which agents are matched to $c_u^1$. So we suppose that agents matched to $u_c^1$ are already fixed. 
Since the \rr rule is strategyproof, agent $i$ cannot get matched to a category in $C_p$ by misreporting. 

The remaining case is that, by misreporting, agent $i$ affects the set of agents who are not matched to a category in $C_p \cup \{c_u^1\}$ and, hence, compete with $i$ to be matched to $c_u^2$. 
We observe the following:
\begin{enumerate}
	\item Since agent $i$ is not matched to a category in $C_p$ and the \rr rule yields a maximum size matching, agent $i$ cannot affect the \emph{number} of agents matched to categories in $C_p$. 
	\item Since the \rr rule is weakly non-bossy, the set of agents \emph{lower} in the baseline ordering who compete to be matched to $c_u^2$ is unchanged under a misreport by $i$.
\end{enumerate}
The above two facts imply that under a misreport by $i$, the number of agents with a \emph{higher} baseline ordering than $i$ who are not matched to a category in $C_p \cup \{c_u^1\}$ and hence compete to be matched to $c_u^2$ does not change. Therefore, under any misreport, agent $i$ is not matched to $c_u^2$.
	 
	 \ref{item:srevnonbossy} Consider an agent $i$ who is unmatched under the \srr rule. 
	 By the proof of \ref{item:srevstrategyproof}, it follows that $i$ cannot affect 
										\begin{enumerate}
											\item the \emph{set} of agents who are matched to $c_u^1$,
											\item the \emph{number} of agents with higher baseline ordering who are not matched to a category in $C_p \cup \{c_u^1\}$, and
											\item the \emph{set} of agents with lower baseline ordering who are not matched to a category in $C_p \cup \{c_u^1\}$.
										\end{enumerate}
Under a truthful report, agent $i$ is unmatched and only agents with higher baseline ordering are matched to $c_u^2$. Hence, it follows that a misreport of agent $i$ does not affect the set of agents with lower baseline ordering who are matched to $c_u^2$.
	 
	 \ref{item:srevorderpreservation} Consider a matching $\mu$ returned by the \srr rule. Suppose it does not satisfy order preservation. 
	 Then there exist two agents $i,j\in N$ such that one of the following holds:
					\begin{enumerate}[label=(\roman*)]
						\item $\mu(i)=c_u^1$, $\mu(j)\neq c_u^1$, 
{$j\succ_{\mu(i)}i$}, and $i$ is eligible for category $\mu(j)$. 
						\item $\mu(j)=c_u^2$, $\mu(i)\in C_p\cup \{c_u^1\}$, 
{$j\succ_{\mu(i)}i$}, and $j$ is eligible for category $\mu(i)$.
						\end{enumerate}
						We first consider the violation of the first type: $\mu(i)=c_u^1$, $\mu(j)\neq c_u^1$, $j\succ_{\pi} i$, and $i$ is eligible for category $\mu(j)$. 
	We examine the step at which $i$ is considered when determining which agents get units from $c_u^1$.
	Since $\mu(i) = c_u^1$, agent $i$ is added to $N_1$.
	Thus, a maximum beneficiary assignment of the agents in $N\setminus(N_1\cup\{i\})$ exists.
	One such matching is $\mu$.
	The matching $\mu'$ obtained from $\mu$ by swapping the matches of $i$ and $j$ is a maximum beneficiary assignment (since $i$ is eligible for $\mu(j)$) for the agents in $N\setminus(N_1\cup\{j\})$.
	Since $j\succ_\pi i$ and $N_1$ weakly increases in every step, $N_1$ cannot have been larger when the algorithm considered agent $j$.
	Hence, at this earlier step, $\mu'$ was also a maximum beneficiary assignment for agents in $N\setminus(N_1\cup\{j\})$.
	But the existence of such a matching is the condition for adding $j$ to $N_1$, which contradicts that $\mu(j)\neq c_u^1$.

						Next we consider a violation of the second type: $\mu(j)=c_u^2$, $\mu(i)\in C_p\cup\{c_i^1\}$, $j\succ_{\pi} i$, and $j$ is eligible for category $\mu(i)$. 
						Since a violation of the first type cannot happen, we may assume that $\mu(i) \in C_p$. But since $j$ is not matched to a category in $C_p$, this implies that the \rr rule does not respect priorities, a contradiction.	
	 
	 \ref{item:srevpolytime} When agents are added iteratively to $N_1$, the algorithm requires checking if there exists a maximum beneficiary assignment of agents in $N\setminus (N_1\cup \{i\})$. 
	 This can be checked in polynomial time by algorithms for computing a maximum size b-matching. 
	 Once $N_1$ is fixed, we call the \rr rule, which we have already shown to be polynomial-time computable. 
	 After that the remaining units can be allocated in linear-time by going down the baseline ordering. 

\end{proof}

%
%
%
%

\begin{remark}[Soft reserves]
   We assumed that only matchings that satisfy the eligibility requirements are feasible. 
	The disadvantage of this approach is that some preferential category units may not be utilized even though some agents are unmatched. 
	If we do not impose eligibility requirements as hard constraints, the setting is referred to as the case of ``soft reserves''. 
	In that case, we can first compute a matching that complies with the eligibility requirements using the \srr rule. 
	If the resulting matching leaves some units from $C_p$ unassigned, we allocate those to unmachted agents in order of the baseline ordering.
	Assuming the preferential categories' priorities over \emph{ineligible} agents are consistent with the baseline ordering, the resulting rule satisfies all the properties in \Cref{thm:allaxioms:s-rev} except for ``hard'' compliance with the eligibility requirements. 
	The argument for strategyproofness is similar to the proof of \Cref{thm:allaxioms:s-rev}\ref{item:srevstrategyproof}. 
	(By misreporting, an agent cannot affect the number of agents with higher baseline order who are unmatched.)
\end{remark}

   		  \section*{Acknowledgements} 
		  
		  This material is based on work supported by the Deutsche Forschungsgemeinschaft under grant {BR~5969/1-1}.
		  The authors thank Adi Ganguly for helpful comments and Tayfun S{\"o}nmez  for additional pointers to the literature. 

\bibliographystyle{named}

\end{document}